\newcommand{\cd}{\cdot}
\newcommand{\C}{\mathbb{C}}
\newcommand{\R}{\mathbb{R}}
\newcommand{\Z}{\mathbb{Z}}
\newcommand{\lbar}[1]{\overline{#1}}
\newcommand{\id}{\mathrm{id}}
\newtheorem{thm}{Theorem}
\newtheorem{lemma}{Lemma}
\theoremstyle{definition}
\newtheorem*{defin*}{Definition}
\theoremstyle{remark}
\begin{document}

\title{Critical Metrics and Covering Number}

\author{Michael H. Freedman}
\address{\hskip-\parindent
	Michael H. Freedman \\
	Microsoft Research, Station Q, and Department of Mathematics \\
	University of California, Santa Barbara \\
	Santa Barbara, CA 93106
}

\begin{abstract}
	In both quantum computing and black hole physics, it is natural to regard some deformations, infinitesimal unitaries, as \emph{easy} and others as \emph{hard}. This has lead to a renewed examination of right-invariant metrics on $\operatorname{SU}(2^N)$. It has been hypothesized that there is a critical such metric---in the sense of phase transitions---and a conjectural form suggested. In this note we explore a restriction that the ring structure on cohomology places on the global geometry of a critical metric.
\end{abstract}

\keywords{Lusternik-Schnirelmann cateogry, critical metric, $\operatorname{SU}(n)$}

\maketitle

A recent paper \cite{BFLS21} examined the coarse geometry of the special unitary group $\operatorname{SU}(n)$, $n = 2^N$, with certain right invariant metrics related both to quantum computation and black hole dynamics. These metrics are called \emph{complexity geometries} as they attempt to quantify the difficulty in synthesizing operators. The chief purpose was to assemble evidence for the existence of a preferred \emph{critical} metric among these. In the critical metric, shortest paths (except at the smallest scales) become highly degenerate. This note employs a cohomological method to understand a geometric consequence of criticality. Two extreme measures of a space's geometry are total volume and diameter. For right invariant metrics on a lie group, understanding the scaling of volume is a triviality as it is proportional to $\sqrt{g}$ where $g$ is the determinant of the infinitesimal metric on the lie algebra, in this case $\operatorname{su}(n-1)$. Volume is, up to dimension-dependent constants, how many grains of sand it takes to fill, or $\epsilon$-balls to cover the space. Diameter is much more difficult to estimate. Taughtologically it is the smallest $d$ such that a single ball of diameter $d$ suffices to cover the space. Clearly these are merely limiting cases of a function, the \emph{covering function} with more descriptive power: Given a compact metric space $X$ and a positive $d$, define $C_X(d) \in \Z^+$ to be the minimal number of balls (in the metric) of diameter $\leq d$ required to cover $X$. Just as volume tells us how large a phase space is, and diameter tells us about the worst case difficulties of travel, $C_X(d)$ tells us how many distinct \emph{locales} of size $d$ exist in the space.

While total volume is easy, upper bounds on diameter are difficult but of great interest. Pioneering work on this problem by Nielsen and collaborators \cites{nielsen05,NDGH06a,NDGH06b,DN06} has, as one of its high points, a result \cite{NDGH06b}, that in one right invariant \emph{cliff} geometry, which exponentially punishes motion in any direction touched more than two qubits, that $\operatorname{diam}(SU(2^N))$ indeed scales exponentially in $N$. These methods are enhanced and extended in \cites{brown21,brown22}. Brown proves exponential scaling in $N$ of diameter for metrics on $\operatorname{SU}(n)$ with a variety of penalty schedules, including the schedule of Line \ref{eq:b_const} below, for all $b > 1$.

We examine the cohomological consequence of the criticality ansatz, which in particular implies exponential diameter, and find that a classical technique, Lusternik-Schnirelmann category, can be adapted to produce non-trivial information on the covering number function: $C_{\operatorname{SU}(n)_{\text{crit.}}}(d)$. This enriches our understanding of critical geometries, provides a potential technique for concluding a given geometry is \emph{not} critical, and opens the door to a more topological discussion of complexity lower bounds. With regard to the latter, it is worth recalling \cite{bss89} where counting Morse critical points in a computational space was used to lower-bound the number of algorithmic branch points. There, as in this note, even small factors are hard won, far from tight, but conceptually important due to their topological origins.

Let us recall the Brown-Susskind exponential penalty metrics \cite{bs18}. They may be described as follows. The Hermitian $2 \times 2$ matrices form a 4D real vector space spanned by the Pauli operators
\[
	\id = \begin{vmatrix}
		1 & 0 \\ 0 & 1
	\end{vmatrix},\ X = \begin{vmatrix}
		0 & 1 \\ 1 & 0
	\end{vmatrix},\ Y = \begin{vmatrix}
		0 & -i \\ i & 0
	\end{vmatrix}, \text{ and } Z = \begin{vmatrix}
		1 & 0 \\ 0 & -1
	\end{vmatrix}
\]

Similarly, it may be proved by induction on $N$ that $\operatorname{Herm}(2^N)$, the Hermitian operators on $N$ qubits $(\C^2)^{\otimes N}$ is the linear span of $4^N$ Pauli words, each of which is an ordered tensor product of $N$ factors chosen from $\{I, X, Y, Z\}$. So, for example $1 \otimes 1 \otimes Z \otimes X \otimes 1 \otimes Y$ is a Pauli word of length 6 and weight 3, weight $w$ being the number of non-identity entries. $\operatorname{Herm}_0(2^N)$ denotes the traceless Hermitians; these have dimension $4^N - 1$, the ``all 1's'' basis element being omitted. Multiplying by $i = \sqrt{-1}$, $i \operatorname{Herm}_0(2^N)$ is the traceless skew-Hermitians, that is the Lie algebra $\operatorname{su}(2^N)$. A right (or left) invariant metric on the simple Lie group $\operatorname{SU}(2^N)$ is specified by giving a metric, i.e.\ a nonsingular inner product, on $\operatorname{su}(2^N)$. Fixing any base $b > 1$, the BS exponential penalty metrics have the form:
\begin{equation}\label{eq:b_const}
	g_{IJ}^b \coloneqq \delta_{IJ} b^{2w(I)}
\end{equation}

That is, these metrics are diagonal in the Pauli word basis and exponentially penalize many-body interaction. The weight of a Pauli word being how many qubits it couples together. As referenced above,
\begin{equation}
	\operatorname{diam}\big(\operatorname{SU}(2^N), g_{IJ}^b\big) \text{ is exponential in } N\text{, for } b > 1
\end{equation}

Our ansatz is that for some base $b$, $g_{IJ}^b$ is critical. Combinatorial gate counting arguments \cite{BFLS21} suggest $b = 4$ may be the critical value, but the exact value will not be important. Let $d_{\text{crit}}$ denote the diameter of the critical metric, and $\beta$ be a geodesic arc between a pair of maximally distant points.

For any path metric space $X$, so in particular any Riemannian metric on a connected manifold, there is a trivial lower bound to the covering number $C(d):$
\begin{equation}
	C_X(d) \geq \frac{\operatorname{diam}(X)}{d} \coloneqq C^t(d), \text{ the \emph{trivial} lower bound}
\end{equation}
coming from solving the 1D problem of covering a shortest arc joining the most distant pair of points $(x,y)$ by sets of diameter $d$. We use $t(d)$ to denote the trivial estimate.

While the geodesic geometry of $g_{IJ}^b$ on $\operatorname{SU}(2^N)$ is enormously complicated, encoding, in a sense, all answers to problems of optimal quantum control, it is conjectured in \cite{BFLS21} that the base $b$ may be tuned to a critical value in which the IR behavior becomes simple: that for $b = b_0$ large distances only experience tiny (sub leading) corrections as the metric is \emph{increased} on some (or all) Pauli words of weight $> 2$, but may experience dramatic reductions of distance if the metric in some of the high weight directions is decreased. The idea is that at this critical exponent $b_0$, massive geodesic redundancy occurs; one can get just about anywhere (that is not too close!) just as efficiently by zig-zagging about in 2-local directions as by following more direct but locally more expensive multibody directions. Adding costs to the multibody directions changes large  distances only slightly because there exist 2-body alternative strategies. But lowering costs to any multibody direction will break the tie and lead to dramatic changes in the IR. While not proven, evidence from low-dimensional models, gate models, and other consistency checks are given.

In this note the existence of a critical $g_{IJ}^{b_0}$ is taken as an ansatz from which to explore geometric implications, to wit the covering number function, $C(d)$.

Let $F(k,N)$ be the cumulative binomial distribution function:
\begin{equation}
	F(k,N) = \sum_{i=1}^k \binom{N}{i}
\end{equation}

This function counts the dimensions of a family of tori $\subset \operatorname{SU}(n)$, exhausting a maximal torus by successively including maximal commuting operators on $k$ and fewer bodies. A concrete way of doing this is to build a sequence of Cartan subalgebras of $\operatorname{su}(n)$:
\begin{equation}\label{eq:strings}
\begin{split}
	& \langle X 1 1 \cdots 1, 1 X 1 \cdots 1, 1 1 X \cdots 1, 1 1 1 \cdots X \rangle,\ \dim = N \\
	& \langle X 1 1 \cdots 1, 1 X 1 \cdots 1, 1 1 X \cdots 1, 1 1 1 \cdots X; X X 1 \cdots 1, \dots, 1 1 1 \cdots X X \rangle,\ \dim = F(2,N) \\
	& \cdots,\ \dim = F(k,n)
\end{split}
\end{equation}
and the exponentiate these to nested tori $T_k$ in $\operatorname{SU}(n)$.

Abstractly, it is easy to compute the diameters using the restriction of the metric $g_{IJ}^b$ of these metric tori $T_k$ (as spaces in their own right, ignoring possible shortcuts through $\operatorname{SU}(n)$) the result is:
\begin{equation}\label{eq:tori_diam}
	\operatorname{diam}(T_k) = \pi \sqrt{\sum_{i=1}^k \binom{N}{i} b^{2i}} =: e_k, \text{ the diameters of } \leq k \text{-body tori}
\end{equation}

A consequence of the criticality ansatz is that there will be no (substantial) shortcuts and, in fact, $e_n$ lower bounds $\operatorname{diam}_c(\operatorname{SU}(n)) =: d_c$, the subscript $c$ denoting our critical $g_{IJ}^{b_0}$.

The equations of motion on a Lie group with right-invariant metric, the Arnold-Keshin equation (see Line 46 \cite{brown22})
\begin{equation}
	\langle \dot{H},K\rangle = i\langle H,[H,K]\rangle
\end{equation}
describes how a geodesic headed in direction $H$ appears to turn towards an arbitrary direction $K$ in terms of the Lie bracket $[,]$ and metric $\langle,\rangle$. $H,K \in \operatorname{su}(n)$ and the differential of right multiplication is used to identify tangent vectors to $\operatorname{SU}(n)$ with the Lie algebra $\operatorname{su}(n)$.

From (\ref{eq:tori_diam}) it is apparent that these tori are all totaly geodesic, for if $H$ is a superposition of strings as in (\ref{eq:strings}), the bracket with our arbitrary $K$ will not contain any similar string in its explansion and thus be orthogonal to $H$ w.r.t.\ any diagonal metric such as (\ref{eq:b_const}). (For example, if $H = 1 \otimes X \otimes X$ and $K = X \otimes Y \otimes X$, $[H,K] = X \otimes Z \otimes 1 - (-X \otimes Z \otimes 1) = 2 X \otimes Z \otimes 1$.) Thus, the tori $T_k$ are all totally geodesic. This means that \emph{locally} there are no \emph{shortcuts} through $\operatorname{SU}(n)$ between pairs of points on $T_k$. The criticality ansatz promotes this to a global statement. Up to sub-leading conditions, which we ignore, synthetic paths \emph{through} $\operatorname{SU}(n)$ between points of $T_k$ should be at best degenerate with geodesic arcs \emph{on} $T_k$.

We next consider the covering function $C_{T_k}$ at $d = b^k$, a value slightly smaller than new (cardinal direction) circle subgroups of $T_k$ complementary to $T_{k-1}$ (which have diameter $\pi b^k$).

\begin{lemma}\label{lm:covering}
	$C_{T_k}(b^k) \geq \binom{N}{k}$.
\end{lemma}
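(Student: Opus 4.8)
We want to show $C_{T_k}(b^k) \geq \binom{N}{k}$. The torus $T_k$ has dimension $F(k,N)$, and it contains $T_{k-1}$ of dimension $F(k-1,N)$, with the quotient circle subgroups corresponding to the $\binom{N}{k}$ weight-$k$ Pauli strings. The natural approach is Lusternik–Schnirelmann category / cup-length: a space covered by $m$ sets each of which is "small" in an appropriate sense cannot support too long a product of positive-degree cohomology classes. But here "small" must be measured metrically, not just categorically — a ball of diameter $d$ need not be contractible in $T_k$. So the real mechanism has to be: a metric ball of diameter $d = b^k$ is too small to "wrap around" any of the $\binom{N}{k}$ new circle factors, hence is null-homotopic when projected to each such circle, hence the induced map on $H^1$ kills the corresponding generator.

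Here is how I would carry this out. First, decompose $T_k \cong T_{k-1} \times (S^1)^{\binom{N}{k}}$ as a product of circles (metrically, the factors are orthogonal by the diagonal form of $g^b_{IJ}$, and the weight-$k$ circles each have circumference $\pi b^k$ in the induced metric — note the diameter of such a circle is $\pi b^k$, so its circumference is $2\pi b^k$; the key point is just that a ball of diameter $b^k < 2\pi b^k$ cannot surject onto it). For each of the $\binom{N}{k}$ weight-$k$ cardinal circles, let $p_j \colon T_k \to S^1_j$ be the projection; this is distance-nonincreasing up to the relevant normalization, so a subset $U \subset T_k$ of diameter $\leq b^k$ maps into an arc of $S^1_j$ of length $< $ the full circle, hence $p_j(U)$ is contractible and $U \hookrightarrow T_k \xrightarrow{p_j} S^1_j$ is null-homotopic. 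Therefore the restriction map $H^1(T_k;\Q) \to H^1(U;\Q)$ annihilates $p_j^*(\text{generator}) =: \alpha_j$. Second, observe that the $\binom{N}{k}$ classes $\alpha_j$ have nonzero cup product $\alpha_1 \cup \cdots \cup \alpha_{\binom{N}{k}} \neq 0$ in $H^*(T_k;\Q)$, since they are pulled back from distinct circle factors of a torus. Third, run the standard cup-length argument: if $T_k = U_1 \cup \cdots \cup U_m$ with each $U_i$ of diameter $\leq b^k$, then for each $i$ we can choose some $j(i)$ with $\alpha_{j(i)}|_{U_i} = 0$; if $m < \binom{N}{k}$ then by pigeonhole there is an $\alpha_j$ not appearing among the $\alpha_{j(i)}$ — wait, that is the wrong direction, so instead: each $U_i$ kills at least one $\alpha_j$, and a relative-cohomology/Mayer–Vietoris induction shows $\alpha_{j(1)} \cup \cdots \cup \alpha_{j(m)}$ lifts to $H^*(T_k, U_1 \cup \cdots \cup U_m) = H^*(T_k,T_k) = 0$; since we only get to use $m$ classes this way, a nonzero product of length $\binom{N}{k}$ forces $m \geq \binom{N}{k}$.

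The main obstacle I anticipate is the bookkeeping in the third step: making the relative cup-product argument precise when the $U_i$ are arbitrary open sets (one must thicken the cover slightly to open sets, use that $\alpha_j|_{U_i}=0$ lifts to a relative class in $H^1(T_k,U_i)$, and multiply these relative classes across $i=1,\dots,m$ to land in $H^m(T_k,\bigcup U_i)$). One subtlety: distinct $U_i$ may kill the same $\alpha_j$, so we should phrase it as "each $U_i$ kills some $\alpha_{j}$, so after discarding repeats the collection of killed classes has size $\leq m$; a length-$\binom{N}{k}$ nonzero product among the $\alpha_j$ cannot be built from fewer than $\binom{N}{k}$ of them," i.e. we really need that \emph{every} $\alpha_j$ is killed by \emph{some} $U_i$ (true, since the $U_i$ cover $T_k$ and a class nonzero on all of $T_k$ can't restrict nontrivially to... no—) — the cleanest formulation is the contrapositive via the relative product: $0 \neq \alpha_1\cup\cdots\cup\alpha_{\binom{N}{k}} \in \mathrm{img}\big(H^{\binom{N}{k}}(T_k,\bigcup_{i=1}^m U_i)\to H^{\binom{N}{k}}(T_k)\big)$ whenever each $\alpha_\ell$ vanishes on some $U_i$ and each $U_i$ is used, forcing $m \geq \binom{N}{k}$ because the relative group vanishes once the $U_i$ cover. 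The metric input (step one) and the topological input (steps two–three) are each individually standard; the care needed is only in matching the diameter threshold $b^k$ to the circle sizes and in the relative-cohomology formalism.
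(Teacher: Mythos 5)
Your proposal is correct and follows essentially the same route as the paper: a relative cup-length (Lusternik--Schnirelmann) argument using the $\binom{N}{k}$ degree-one classes dual to the weight-$k$ ``long circles,'' each of which dies on any set of diameter $\leq b^k$, so that their lifts cup into $H^{\binom{N}{k}}\bigl(T_k,\bigcup_i \mathcal{O}_i\bigr)\cong 0$ while the corresponding product in the cohomology of $T_k$ is a nonzero exterior monomial. The only differences are cosmetic: the paper phrases the classes relative to $T_{k-1}$ rather than as absolute pullbacks $p_j^\ast$ of circle generators, and you make explicit the metric step (1-Lipschitz projection onto a circle factor of circumference $2\pi b^k$, so the image lies in a proper arc) that the paper states only as the sets being too small to contain the dual circles.
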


\begin{proof}
	$H^\ast(T_k;\Z)$ is an exterior algebra on $F(k,N)$ 1-dimensional generators. This follows from the K\"{u}nneth formula and induction as a torus is a product of circles. Similarly, $H^\ast(T_{k-1};\Z)$ is an exterior algebra on $F(k-1,N)$ 1-dimensional generators and since restricting induces an injection. The relative cohomology group $H^\ast(T_k, T_{k-1};\Z)$ is an exterior algebra on $\binom{N}{k}$ generators $\{\alpha_1, \dots, \alpha_{\binom{N}{k}}\}$ which may be identified as pairing $\delta_{ij}$ with the $\binom{N}{k}$ ``long circles'': $e^{2\pi i t}($weight $k$ word of $X$'s$) \subset T_k$.

	Now, to obtain a contradiction, suppose $\{\mathcal{O}_1, \dots, \mathcal{O}_{\binom{N}{k}}\}$ is a covering of $T_k$ by $\binom{N}{k}$ sets of diameter $\leq b^k$, a value chosen too small to contain any of the long circles subgroups. The diagram below is with integer coefficients and all arrows are induced by restriction classes (the reader may take real coefficients and, via de Rham theory, think of restricing differential forms if she prefers).
	\[
		\begin{tikzpicture}
			\node at (0,0) {$H^1(T_k, T_{k-1} \cup \mathcal{O}_1) \otimes H^1(T_k, T_{k-1} \cup \mathcal{O}_2) \otimes \cdots \otimes H^1(T_k, T_{k-1} \cup \mathcal{O}_{\binom{N}{k}}) \xrightarrow{\cup} H^{\binom{N}{k}}(T_k,\bigcup_{i=1}^{\binom{N}{k}} \mathcal{O}_i) \cong 0$};
			
			\node at (-0.6,-1) {$\otimes \cdots \otimes$};
			\node at (1.8,-1) {$H^1(T_k,T_{k-1})$};
			\node at (3.8,-1) {$\xrightarrow{\cup}$};
			\node at (5.6,-1) {$H^{\binom{N}{k}}(T_k, T_{k-1})$};
			\node at (7.45,-1) {$\not\cong 0$};
			\node at (-2.6,-1) {$H^1(T_k,T_{k-1})$};
			\node at (-6.1,-1) {$H^1(T_k,T_{k-1})$};
			\node at (-4.5,-1) {$\otimes$};
			
			\node at (-0.6,-2) {$\otimes \cdots \otimes$};
			\node at (1.8,-2) {$H^1(\mathcal{O}_{\binom{N}{k}})$};
			\node at (-2.6,-2) {$H^1(\mathcal{O}_2)$};
			\node at (-4.5,-2) {$\otimes$};
			\node at (-6.1,-2) {$H^1(\mathcal{O}_1)$};
			
			\draw[->] (1.8,-0.3) -- (1.8,-0.7);
			\draw[->] (1.8,-1.3) -- (1.8,-1.7);
			\draw[->] (5.6,-0.3) -- (5.6,-0.7);
			\draw[->] (-2.6,-0.3) -- (-2.6,-0.7);
			\draw[->] (-2.6,-1.3) -- (-2.6,-1.7);
			\draw[->] (-6.1,-0.3) -- (-6.1,-0.7);
			\draw[->] (-6.1,-1.3) -- (-6.1,-1.7);
			\node[rotate=20] at (-7.6,-0.4) {\footnotesize{$\lbar{\alpha}_1 \in$}};
			\node[rotate=20] at (-4.15,-0.4) {\footnotesize{$\lbar{\alpha}_2 \in$}};
			\node[rotate=20] at (0.1,-0.45) {\footnotesize{$\lbar{\alpha}_{\binom{N}{k}} \in$}};
			\node[rotate=20] at (-7,-1.3) {\footnotesize{$\alpha_1 \in$}};
			\node[rotate=20] at (-3.45,-1.3) {\footnotesize{$\alpha_2 \in$}};
			\node[rotate=20] at (0.8,-1.45) {\footnotesize{$\alpha_{\binom{N}{k}} \in$}};
		\end{tikzpicture}
	\]

	Because the $\mathcal{O}_i$ are too small to contain the dual circle subgroups, the $\alpha_i$ restrict to zero and thus pull back to $\lbar{\alpha}_i$. However, the $\lbar{\alpha}_i$ must cup to zero if $\{\mathcal{O}_i\}$ is a cover of $T_k$ as the product lands in the cohomology of the trivial pair $(T_k, \bigcup_{i=1}^{\binom{N}{k}} \mathcal{O}_i) = (T_k, T_k)$. This contradicts naturality of the vertical map to $H^{\binom{N}{k}}(T_k, T_{k-1})$ where the product is the top class in its exterior algebra.
\end{proof}

In the context of the ansatz, Lemma \ref{lm:covering} can be promoted to a lower bound on $C_{\operatorname{SU}(n)}(b^k)$, using the 1D reasoning that lead to $C^t(d)$, the trivial lower bound. By Line \ref{eq:tori_diam} and the discussion below it, $\operatorname{SU}(n)$ contains a geodesic arc $\alpha$ of length $d_c$. Since right-translation is an isometry, $\left\lfloor \frac{d_c}{e_k} \right\rfloor$ disjoint isometric copies of $T_k$ can be located on $\alpha$. By Lemma \ref{lm:covering}, each copy requires $\binom{N}{k}$ balls of diameter $b^{2k}$ to cover, implying:

\begin{thm}\label{thm:c_su}
	The criticality ansatz implies $C_{\operatorname{SU}(n)}(b^k) \geq \left\lfloor \frac{d_c}{e_k} \right\rfloor \binom{N}{k}$. \qed
\end{thm}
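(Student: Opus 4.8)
The plan is to bootstrap Lemma~\ref{lm:covering} from a single torus up to $\operatorname{SU}(n)$ by the same one-dimensional packing idea behind the trivial bound $C^t$: the critical metric carries a long minimizing geodesic, and along it we park many disjoint isometric copies of $T_k$, each contributing its own $\binom{N}{k}$ balls to any cover.

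First I would invoke the criticality ansatz together with Line~(\ref{eq:tori_diam}). Since $e_n \leq d_c$ and, up to subleading corrections, there are no shortcuts through $\operatorname{SU}(n)$, there is an arclength-parametrized minimizing geodesic $\beta\colon[0,d_c]\to\operatorname{SU}(n)$ joining a maximally distant pair. Put $m=\lfloor d_c/e_k\rfloor$ and cut $\beta$ into the consecutive sub-arcs $\beta_j=\beta|_{[(j-1)e_k,\,je_k]}$, $j=1,\dots,m$; each has length $e_k=\operatorname{diam}(T_k)$ and, being a sub-arc of a global minimizer, is itself minimizing between its endpoints. Next, using that right translation is an isometry of the right-invariant metric $g_{IJ}^{b_0}$ and that the $T_k$ are totally geodesic (the bracket computation following the Arnold-Keshin equation above), I would attach to each $\beta_j$ an isometric copy $T_k^{(j)}\subset\operatorname{SU}(n)$ having $\beta_j$ as a diameter-realizing geodesic. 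Here the ansatz---not merely local total-geodesy---is what forces these copies to live at genuine ``torus scale'' and to be mutually separated along $\beta$ by more than $b^k$ (recall $e_k/b^k\to\infty$).

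The count is then routine. Let $\{\mathcal{O}_i\}$ be any cover of $\operatorname{SU}(n)$ by balls of diameter $\leq b^k$. For fixed $j$, the traces $\mathcal{O}_i\cap T_k^{(j)}$ cover $T_k^{(j)}$ by sets too small to contain a long circle, so Lemma~\ref{lm:covering} (transported through the isometry $T_k^{(j)}\cong T_k$) forces at least $\binom{N}{k}$ of the $\mathcal{O}_i$ to meet $T_k^{(j)}$. Since $T_k^{(1)},\dots,T_k^{(m)}$ are pairwise more than $b^k$ apart, no $\mathcal{O}_i$ meets two of them, whence the cover has at least $m\binom{N}{k}$ members; as the cover was arbitrary, $C_{\operatorname{SU}(n)}(b^k)\geq\lfloor d_c/e_k\rfloor\binom{N}{k}$.

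I expect the real obstacle to be the middle step: converting the slogan ``string $m$ disjoint copies of $T_k$ along $\beta$'' into the two facts the count consumes---that each minimizing $\beta_j$ genuinely is a diameter arc of an isometrically embedded $T_k$, and that consecutive copies are separated by strictly more than $b^k$ rather than merely abutting. The first is exactly where the criticality ansatz does its work (local total-geodesy of the $T_k$ alone does not suffice); the second, taken literally, would cost a bounded factor in $m$, but since $e_k/b^k\to\infty$ this is absorbed into the ``up to subleading corrections'' convention already in force. As stressed in the introduction, the resulting inequality is in any case conceptual rather than sharp.
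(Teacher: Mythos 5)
Your argument is essentially the paper's own proof: the ansatz supplies a minimizing geodesic of length $d_c$, right-translated isometric copies of $T_k$ are strung along it, Lemma~\ref{lm:covering} is applied to the trace of an arbitrary cover on each copy, and the per-copy counts are added to give $\lfloor d_c/e_k\rfloor\binom{N}{k}$. You are in fact more explicit than the paper about the point it glosses---that the copies must be mutually separated by more than $b^k$ for the counts to add, at the cost of a bounded factor hidden in the subleading corrections---while your one embellishment, that each sub-arc $\beta_j$ is a diameter-realizing geodesic of its copy, is neither justified nor needed, since placing translated copies at suitably spaced basepoints on $\beta$ already suffices.
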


Let us compare this to the topological lower bound, call it $C^{\text{top}}(b^k) \coloneqq \frac{d_c}{ek} \binom{N}{k}$ with the trivial lower bound $C^t$ at $k$th powers of the base $b$.
\begin{equation}
	\frac{C^{\text{top}}(b^k)}{C^t(b^k)} = \frac{\frac{d_c}{d_k}\binom{N}{k}}{\frac{d_c}{b^k}} = \frac{b^k}{d_k} \binom{N}{k} = \frac{b^k}{\sqrt{\sum_{i=1}^k \binom{N}{i} b^{2k}}} \binom{N}{k}
\end{equation}

For $k < \sqrt{N}$ we may approximate this as:
\begin{equation}\label{eq:approx}
	\frac{C^{\text{top}}(b^k)}{C^t(b^k)} \approx \sqrt{\frac{b^{2k}\binom{N}{k}^2}{\binom{N}{k}b^{2k}}} = \sqrt{\binom{N}{k}}
\end{equation}

Line \ref{eq:approx} is the multiplicative enhancement the cohomological argument achieves over the naive geometric one.

The cohomological method has the potential for generalization beyond the abelian Lie algebras and torus subgroups employed. Any sub Lie group $G \subset \operatorname{SU}(n)$ which: (1) supports long nontrivial cup products, and (2) has its Lie algebra $\mathfrak{g}$ spanned by high weight directions can play the role assigned to the torus $T_k$ in our proof. Condition (1) ensures a large covering number for any acyclic covering of the subgroup and (2) ensures that even fairly large balls in the metric will indeed be acyclic. Looking for such Lie subgroups, actually their subalgebras, is a problem in linear coding theory. One may regard $\operatorname{su}(n)$ as a real Clifford algebra spanned by products of $2n$ gamma-matices (Majoranas) $\gamma_1, \dots, \gamma_{2n}$. Then a bit string of length $2n$ describes such a product and the condition that a set of such bit strings comprise a linear subspace $F_2^k \subset F_2^{2n}$, says that the strings constitute a \emph{linear code}, and algebraically that the operators these string span over $\R$ is both an associative algebra when $\cd =$ Clifford multiplication and a Lie algebra under $[x,y] = x \cd y - y \cd x$. In the case we studied, in qubit language, the bracket vanished on the Pauli words with just 1's and $X$'s, and we had Cartan subalgebras. The other extreme would be to locate Majorana bit strings $x,y$ which all anti-commute. This is the combinatorial condition $w(x) + w(y) - \operatorname{overlap}(x,y) = odd$. In that case, the Lie algebra spanned will be an $\operatorname{su}(k) \subset \operatorname{su}(n)$. In coding language $\frac{k}{n}$ is the \emph{rate} and $\min_{x \in F_2^k} w(x) = d$ the code's \emph{distance}.

Linear codes with these three properties: large rate, large distance, and $w(x) + w(y) - \operatorname{overlap}(x,y)$ odd would yield a nonabelian alternative to the estimates presented here. It is our hope such codes may be found.

We conclude with a remark on a path \emph{not} taken. If one has a Riemannian metric on a compact homogeneous space $M$ for which $\operatorname{vol}(r)$, the volume of a ball of radius $r$ about a (any) point of $M$, is fairly well understood, then there is a useful lower bound:
\begin{equation}\label{eq:vol_bound}
	C(d) > \frac{\operatorname{vol}(M)}{\operatorname{vol}(2r)}
\end{equation}

It is natural to ask if (\ref{eq:vol_bound}) could be applied to $\operatorname{SU}(n)_{\text{critical}}$. As we remarked at the start, $\operatorname{vol}(M)$ is accessible but known methods for upper bounding $\operatorname{vol}(2r)$ are very slack for $\operatorname{SU}(n)$ with exponential penalty metrics. The tool here is the classical Bishop-Gromov theorem; the problem with its application is it takes as input the \emph{smallest} eigenvalue of the Ricci quadratic form---not any sort of average value. Computations show that for exponential penalty metrics of $\operatorname{SU}(n)$, the smallest eigenvalue (in norm) grows exponentially with $N$. The spectrum has very long tails. This leads to a wild overestimate of volume growth as a function of $r$ making (\ref{eq:vol_bound}) inefficient. A better understanding of volume growth in right invariant metrics on $\operatorname{su}(n)$ is the subject of \cite{brown21} and \cites{BF22a,BF22b} but so far does not supercede the result presented here.

\bibliography{references}

\end{document}